\newtheorem{lemma}{Lemma}
\newtheorem{theorem}{Theorem}
\theoremstyle{definition}
\newtheorem{definition}{Definition}
\theoremstyle{remark}
\newcommand{\AAUL}{\updownarrow}
\newcommand{\propd}{\mathit{propd}}
\newcommand{\refa}{\mathit{ref}_a}
\newcommand{\return}{\mathit{return}}
\newcommand{\M}{\mathcal{M}}
\newcommand{\up}{\mathit{up}}
\newcommand{\down}{\mathit{down}}
\newcommand{\leftt}{\mathit{left}}
\newcommand{\rightt}{\mathit{right}}
\newcommand{\north}{\mathit{north}}
\newcommand{\south}{\mathit{south}}
\newcommand{\east}{\mathit{east}}
\newcommand{\west}{\mathit{west}}
\renewcommand{\phi}{\varphi}
\title{The Undecidability of Arbitrary Arrow Update Logic}
\author{Hans van Ditmarsch \and Wiebe van der Hoek \and Louwe B. Kuijer}
\date{}
\begin{document}
\maketitle

\begin{abstract}
Arbitrary Arrow Update Logic is a dynamic modal logic that uses an arbitrary arrow update modality to quantify over all arrow updates. Some properties of this logic have already been established, but until now it remained an open question whether the logic's satisfiability problem is decidable. Here, we show that the satisfiability problem of Arbitrary Arrow Update Logic is co-RE hard, and therefore undecidable, by a reduction of the tiling problem.

\end{abstract}
\section{Introduction}
In the field of Dynamic Epistemic Logic, various kinds of \emph{updates} are used to model events that change the information state of agents. These kinds of updates include public announcements \cite{plaza_1989, baltag_1998}, action models \cite{baltag_1998} and arrow updates \cite{AUL}, among others. Somewhat recently, there has been a trend to enrich logics with ``arbitrary'' versions of such updates. See for example Arbitrary Public Announcement Logic (APAL) \cite{APAL}, Group Announcement Logic (GAL) \cite{Agotnes2010}, Arbitrary Action Model Logic (AAML) \cite{hales_2013}, and Arbitrary Arrow Update Logic (AAUL) \cite{AAUL_AiML,AAUL_AIJ}. The intuition behind these ``arbitrary'' operators is that they represent universal quantification over their non-arbitrary counterpart; for example, $\phi$ is true after an arbitrary public announcement if $\phi$ is true after every public announcement.

For some of these logics, namely APAL and GAL, the satisfiability problem is undecidable \cite{french_2008,agotnes_2014}. The satisfiability problem of AAML, on the other hand, is decidable \cite{hales_2013}. For AAUL, it remained unknown whether the satisfiability problem is decidable. Here, we show that AAUL's satisfiability problem can encode the tiling problem \cite{wang_1961}. Because the tiling problem is known to be co-RE complete \cite{berger_1966}, this shows that the satisfiability problem of AAUL is co-RE hard.

%The satisfiability problem of the logics with ``arbitrary'' operators is typically undecidable.\footnote{AAML is an exception to this rule; the satisfiability problem of AAML is decidable because AAML is no more expressive than basic modal logic.} Notably, APAL was proven to be undecidable in \cite{french_2008} and the related Group Announcement Logic (GAL) in \cite{agotnes_2014}. Here, we show that the satisfiability problem of AAUL is also undecidable. 

%Given that APAL and GAL are undecidable, the undecidability of AAUL is not a great surprise. Furthermore, we show the undecidability of AAUL by a reduction of the same undecidable problem that was used to show the undecidability of APAL and GAL: the tiling problem. So there are some similarities between our proof for the undecidability of AAUL and the existing proofs for APAL and GAL. The reduction itself is quite different from existing ones, however.

The structure of this paper is as follows. First, in Section~\ref{sec:semantics} we introduce the syntax and semantics of AAUL. 
%Then, in Section~\ref{sec:tiling} we provide a brief definition of the tiling problem. 
Then, in Section~\ref{sec:reduction} we provide a brief definition of the tiling problem and show that it can be encoded in the satisfiability problem of AAUL.

\section{AAUL Syntax and Semantics}
\label{sec:semantics}
Let $\mathcal{P}$ be a countable set of propositional variables and $\mathcal{A}$ a finite set of agents. We assume that $|\mathcal{A}|\geq 6$.
\begin{definition}
The language $\mathcal{L}_\mathit{AAUL}$ of AAUL is given by the following normal forms:
\begin{align*}
	\phi ::= {} & p \mid \neg \phi \mid \phi \wedge \phi \mid \square_a\phi \mid [U]\phi \mid [\AAUL]\phi\\
	U::= {}& (\phi,a,\phi)\mid U, (\phi,a,\phi)
\end{align*}
where $p\in \mathcal{P}$ and $a\in \mathcal{A}$. The language $\mathcal{L}_\mathit{AUL}$ is the fragment of $\mathcal{L}_\mathit{AAUL}$ that does not contain $[\AAUL]$.
\end{definition}

We use $\vee, \rightarrow, \leftrightarrow, \lozenge, \langle U\rangle, \langle \AAUL\rangle, \bigvee$ and $\bigwedge$ in the usual way as abbreviations. Furthermore, we slightly abuse notation by identifying the list $U=(\phi_1,a_1,\psi_1),\cdots,\allowbreak (\phi_k,a_k,\psi_k)$ with the set $U=\{(\phi_1,a_1,\psi_1),\cdots,(\phi_k,a_k,\psi_k)\}$.

AAUL is evaluated on standard multi-agent Kripke models.
\begin{definition}
A model $\M$ is a triple $\M=(W,R,V)$ where $W$ is a set of states, $R:\mathcal{A}\rightarrow 2^{W\times W}$ assigns to each agent an accessibility relation and $V:\mathcal{P}\rightarrow 2^W$ is a valuation.
\end{definition}
Note that we are using the class of all Kripke models. This is unlike APAL and GAL, which are typically considered on the class of S5 models. 

Now, let us consider the semantics of AAUL. We start by giving the formal definition, after the definition we briefly discuss the intuition behind some of the operators.
\begin{definition}
Let $\M=(W,R,V)$ be a model and let $w\in W$. The satisfaction relation $\models$ is given by
  \[
  \begin{array}{lcl}
    \M,w\models p &\text{iff}& w\in V(p)
    \\
    \M,w\models\neg\varphi &\text{iff}& \M,w\not\models\varphi
    \\
    \M,w\models(\varphi\wedge\psi) &\text{iff}& 
    \M,w\models\varphi \text{ and } \M,w\models\psi
    \\
    \M,w\models \square_a\varphi &\text{iff}&
    \M,v\models\varphi \text{ for each } v \text{ such that } (w,v)\in R(a) 
    \\
    \M,w\models[U]\varphi & \text{iff} & (\M*U),w\models\varphi\\
    \M,w\models [\AAUL] \varphi &\text{iff}& \M,w \models [U]\varphi
        \text{ for each } U \in L_\mathit{AUL}\\
	\end{array}\]
	where $(\M*U)$ is given by:
	\[
    \begin{array}{lcl}
    \M*U & = & (W,R^U,V)\\
    R^{U}(a) & = &
    \{ (v,v')\in R(a) \mid \exists(\varphi,a,\varphi')\in U:
    \\
    &&
    \phantom{\{ v'\in R^\M_a(v) \mid}
    (\M,v\models\varphi \text{ and } \M,v'\models\varphi')
    \}
  \end{array}
  \renewcommand{\arraystretch}{1.0}
  \]
\end{definition}

A full discussion of the applications of AAUL and of the intuitions behind the semantics of arrow updates and arbitrary arrow updates is outside the scope of this paper. For such a discussion, see \cite{AUL} and \cite{AAUL_AIJ}. However, in order to understand the undecidability proof it is important to understand the semantics of AAUL. We therefore do provide a very brief explanation of the intuition behind and the semantics of AAUL.

Although our goal is to understand AAUL, is is useful to start by considering public announcements. We assume that the reader is familiar with public announcement logic, if not see for example \cite{baltag_1998}. A public announcement $[\psi]$ informs all agents that $\psi$ is true. As a result, every possible world that the agents previously considered possible that does not satisfy $\psi$ is rejected after the announcement, since it is incompatible with the new information. Semantically, this corresponds to a model $\M$ being transformed into a model $\M*\psi$ where all $\neg \psi$ states of $\M$ have been removed.

Like public announcements, arrow updates provide agents with new information. Unlike with public announcements, however, the new information provided by an arrow update can (i) differ per agent and (ii) depend on the current state. A typical example is a card game, where cards have been dealt face down. Now, agent $a$ picks up her hand of cards and looks at it. Obviously, the information that $a$ gains from this action is different than the information the other agents gain: $a$ learns what her cards are whereas the other agents only learn that $a$ now knows what her cards are. It is perhaps less obvious that the information that $a$ gains also differs per state. Suppose that $a$ has been dealt the 7 of Hearts. Then by looking at her cards $a$ learns that she has the 7 of Hearts. If, on the other hand, $a$ has been dealt the 8 of Clubs, then she learns that she has the 8 of Clubs. Learning that you have the 7 of Hearts is different from learning that you have the 8 of Clubs, so the information given to $a$ depends on the state of the world.

With arrow updates we formalize the information that the agents gain in such a situation. In principle, we could do this in two ways: we could specify the things that are \emph{incompatible} with the new information, or the things that are \emph{compatible}. We choose to follow public announcements in this aspect, so just like $[\psi]$ says that the new information is compatible with $\psi$, we use an arrow update $U$ to specify the information that is compatible with $U$. Since the information gained in an arrow update can depend on the agent and on the current state, we need triple $(\phi,a,\psi)$. We call such triple \emph{clauses}; they can be read as ``if the current state satisfies $\phi$, then the information provided to agent $a$ is compatible with $\psi$.''

An arrow update $U$ is a finite set clauses, $U=\{(\phi_1,a_1,\psi_1),\cdots,(\phi_k,a_k,\psi_k)\}$ (where it is possible that $\phi_i=\phi_j$, $a_i=a_j$ or $\psi_i=\psi_j$ for $i\not = j$). This still leaves the decision of what to do if a state matches multiple clauses. Suppose, for example, that $(\phi_1,a,\psi_1), (\phi_2,a,\psi_2)\in U$ and that a state satisfies both $\phi_1$ and $\phi_2$. There are several options for how to interpret this situation, we choose to interpret it disjunctively: if a state satisfies $\phi_1$ and $\phi_2$, then any state that satisfies $\psi_1$ \emph{or} $\psi_2$ is consistent with the new information.

On the semantical level, this means that $\M*U$ should contain exactly those arrows of $\M$ that match at least one clause of $U$, where we say that $(w_1,w_2)\in R(a)$ matches $(\phi_1,a_1,\psi_1)$ if and only if $\M,w_1\models \phi_1$, $a=a_1$ and $\M,w_2\models \psi_1$.

Arbitrary arrow updates then quantify over such arrow updates. However, in order to avoid circularity we restrict this quantification to those arrow updates that do not themselves contain an arbitrary arrow update $[\AAUL]$. So $\M,w\models [\AAUL]\phi$ if and only if $\M,w\models [U]\phi$ for all $\phi\in \mathcal{L}_\mathit{AUL}$.

\section{Reducing the Tiling Problem}
\label{sec:reduction}

\subsection{The Tiling Problem}
\label{sec:tiling}
We will prove the undecidability of AAUL by a reduction of the tiling problem. The tiling problem was introduced in \cite{wang_1961} and can be defined as follows.
\begin{definition}
Let $C$ be a finite set of colors. A \emph{tile type} is a function $i:\{\north,\south,\east,\west\}\rightarrow C$.

An instance of the tiling problem is a finite set $\mathit{types}$ of tile types. A solution to an instance of the tiling problem is a function $\mathit{tiling}:\mathbb{Z}\times\mathbb{Z}\rightarrow \mathit{types}$ such that, for every $(z_1,z_2)\in \mathbb{Z}\times\mathbb{Z}$,
\begin{align*}
\mathit{tiling}(z_1,z_2)(\north) = {} & \mathit{tiling}(z_1,z_2+1)(\south)\\
\mathit{tiling}(z_1,z_2)(\east) = {} & \mathit{tiling}(z_1+1,z_2)(\west).
\end{align*}
\end{definition}
The tiling problem was shown to be undecidable in \cite{berger_1966}. In fact, the tiling problem is co-RE complete. Therefore, by reducing the tiling problem to the satisfiability problem of AAUL, we show that the latter problem is co-RE hard. Whether AAUL's satisfiability problem is co-RE is not currently known.

\subsection{Encoding the Tiling Problem in AAUL}
%We want to encode the tiling problem in AAUL. Since AAUL is a modal logic, the most straightforward way to do so is to represent every point in $\mathbb{Z}\times\mathbb{Z}$ with a state. For every tile type $i$ we then use a propositional variable $p_i$ to represent the state containing a tile of type $i$. 

We want to encode the tiling problem in AAUL. So for every instance $\mathit{types}$ of the tiling problem we define a formula $\psi_\mathit{types}$ of AAUL that is satisfiable if and only if $\mathit{types}$ can tile the plane. The strategy for doing this is as follows.

We represent each point of $\mathbb{Z}\times \mathbb{Z}$ by a state. For every $i\in \mathit{types}$ we then use a propositional variable $p_i$ to represent ``the current state contains a tile of type $i$.'' For every $c\in C$ we use propositional variables $\north_c$ (resp.\ $\south_c, \east_c, \west_c$) to represent the northern (resp.\ southern, eastern, western) edge of the current tile having color $c$. Finally, we use relations $\up, \down, \leftt$ and $\rightt$ to represent one tile being above, below, to the left and to the right, respectively, of the current tile.

In addition to the states $(n,m)$ that correspond to points in $\mathbb{Z}\times\mathbb{Z}$, we also use an auxiliary state $s_0$. This state $s_0$ is not part of the grid, and does not contain any tile. Instead, it is the state where $\psi_\mathit{types}$ will be evaluated. We therefore also refer to $s_0$ as the origin state. In order to distinguish $s_0$ from the states that are part of the grid we use the propositional variable $p$, which holds on $s_0$ but not on any $(n,m)$.

Now, given any state $(n,m)$, it is relatively easy to check whether the constraints of a tiling are satisfied locally. For example, $\bigvee_{i\in \mathit{types}}p_i \wedge \bigwedge_{i\not = j \in \mathit{types}}\neg(p_i\wedge p_j)$ holds if and only if the current state has exactly one type of tile, and $\bigwedge_{c\in C}(\north_c\rightarrow \square_\up\south_c)$ holds if and only if the northern color of the current tile matches the southern color of the tile above.

Making sure that the \emph{global} constraints of a tiling are satisfied is harder, though. We do this in the following way. Firstly, we take a relation $b$, and force it to act as a kind of transitive closure\footnote{The precise properties of $b$ are more complicated than this, but for the purpose of this informal introduction to the proof it suffices to think of it as a transitive closure.} over $\up, \down, \leftt$ and $\rightt$. So while $\bigvee_{i\in \mathit{types}}p_i \wedge \bigwedge_{i\not = j \in \mathit{types}}\neg(p_i\wedge p_j)$ says that the current state has exactly one tile type, the formula $\square_b\bigvee_{i\in \mathit{types}}p_i \wedge \bigwedge_{i\not = j \in \mathit{types}}\neg(p_i\wedge p_j)$ says that \emph{all} states (except the current one\footnote{Recall that $\psi_\mathit{types}$ will be evaluated in $s_0$, so the ``current state'' in question is $s_0$, which does not have a tile type.}) have exactly one tile type. Secondly, we enforce a grid-like structure onto the domain.

With the above in mind, let us define the formula $\psi_\mathit{types}$.

\begin{definition}
Let $\mathit{types}$ be an instance of the tiling problem. The formula $\psi_\mathit{types}$ is given by
\begin{align*}\psi_\mathit{types} := {} & \psi_1 \wedge \psi_2\wedge \bigwedge_{x\in D}(\psi_{3,x}\wedge \psi_{4,x}\wedge \propd_x\wedge \return_x)\\
 & \wedge \mathit{inverse}\wedge \mathit{commute}\\
 & \wedge \mathit{one\_tile} \wedge \mathit{one\_color} \wedge \mathit{tile\_colors} \wedge \mathit{tile\_match}
\end{align*}
where
{
\allowdisplaybreaks
\begin{align*}
D := {} & \{\up,\down,\leftt,\rightt\}\\
\psi_1 := {} & \refa \wedge p \wedge \lozenge_b \top \wedge \square_b \neg p\\
\psi_2 := {} & \square_b (\refa \wedge \lozenge_b p)\wedge [\AAUL](\lozenge_a\top \rightarrow \square_b\square_b\lozenge_a \top)\\
\psi_{3,x} := {} & \square_b (\lozenge_x (\neg p \wedge \refa \wedge \lozenge_b p) \wedge [\AAUL](\lozenge_x\lozenge_a\top\rightarrow \square_x\lozenge_a\top))\\
\psi_{4,x} := {} & [\AAUL](\lozenge_a \top \rightarrow \square_b\square_x\square_b\lozenge_a\top)\\
\refa := {} & \lozenge_a \lozenge_a\top \wedge [\AAUL]\neg \lozenge_a\square_a\bot\\
\propd_x := {} & \square_b [\AAUL] ((\square_a\bot \wedge \lozenge_x \lozenge_a\top \wedge \lozenge_b(\lozenge_b\top\wedge \square_b\lozenge_a\top)\wedge \\
&\langle \AAUL\rangle(\lozenge_x \lozenge_a\top\wedge\lozenge_b\lozenge_b\square_a\bot))\rightarrow [U_\mathit{x}]\langle \AAUL\rangle(\lozenge_x \lozenge_a\top\wedge\lozenge_b\lozenge_b\square_a\bot))\\
U_\mathit{x} := {} & (p\vee \square_a\bot,b,\top),(\top,a,\top),(\square_a\bot,x,\top)\\
\return_x := {} & \square_b \langle \AAUL\rangle (\square_a\bot \wedge \lozenge_b\top \wedge 
%\lozenge_x \langle\AAUL\rangle
(\lozenge_a\top \wedge \lozenge_b (\lozenge_b\top\wedge \square_b\lozenge_a\top)\wedge \\ & [\AAUL](\lozenge_a\top\rightarrow \square_b\square_b\lozenge_a\top)))\\
\mathit{inverse} := {} & \square_b [\AAUL](\square_a\bot \rightarrow(\square_\up\square_\down\square_a\bot\wedge \square_\down\square_\up\square_a\bot \\ & \wedge \square_\leftt\square_\rightt\square_a\bot \wedge \square_\rightt\square_\leftt\square_a\bot))\\
\mathit{commute} := {} & \square_b [\AAUL]\bigwedge_{(x,y)\in E}(\lozenge_x\lozenge_y\square_a\bot \rightarrow \square_y\square_x\square_a\bot)\\
E := {} & \{(\up,\leftt), (\up,\rightt),(\down,\leftt),(\down,\rightt), \\ 
&(\leftt,\up), (\leftt,\down),(\rightt,\up),(\rightt,\down)\}\\
\mathit{one\_tile} := {} & \square_b (\bigvee_{i\in \mathit{tiles}} p_i \wedge \bigwedge_{i\not = j \in \mathit{tiles}}\neg (p_i\wedge p_j))\\
\mathit{one\_color} := {} & \square_b \bigwedge_{c\in C}(\north_c\rightarrow \bigwedge_{d\in C\setminus\{c\}}\neg \north_d) \wedge \\ & \square_b \bigwedge_{c\in C}(\south_c\rightarrow \bigwedge_{d\in C\setminus\{c\}}\neg \south_d)\wedge \\
& \square_b \bigwedge_{c\in C}(\east_c\rightarrow \bigwedge_{d\in C\setminus\{c\}}\neg \east_d)\wedge \\ & \square_b \bigwedge_{c\in C}(\west_c\rightarrow \bigwedge_{d\in C\setminus\{c\}}\neg \west_d)\\
\mathit{tile\_colors} := {} & \square_b \bigwedge_{i\in \mathit{tiles}}(p_i\rightarrow (\north_{i(\north)}\\ & \wedge \south_{i(\south)}\wedge \east_{i(\east)}\wedge \west_{i(\west)} ))\\
\mathit{tile\_match} := {} & \square_b \bigwedge_{c\in C}((\north_c\rightarrow \square_\up \south_c)
%\wedge (\down_c\rightarrow \square_\down \up_c)\\
\wedge (\west_c\rightarrow \square_\leftt \east_c) 
%\wedge (\rightt_c\rightarrow \square_\rightt \leftt_c)
)
\end{align*}}
\end{definition}

The first line of $\psi_\mathit{types}$ guarantees that each state other than the origin has exactly one successor for each direction $x\in D$, and that $b$ is transitive over all $x\in D$ (by which we mean: if $(s_0,s_1)\in R(b)$ and $(s_1,s_2)\in R(x)$, then $(s_0,s_2)\in R(b)$). The second line of $\psi_\mathit{types}$ then guarantees that the four directions form a grid. Finally, the third line of $\psi_\mathit{types}$ guarantees that the grid is tiled in an appropriate way.

%We will show that $\psi_\mathit{tiles}$ is satisfiable if and only if $\mathit{tiles}$ can tile $\mathbb{Z}\times\mathbb{Z}$. The first two lines of $\psi_\mathit{tiles}$ force any model that satisfies $\psi_\mathit{tiles}$ to have a structure similar to $\mathbb{Z}\times\mathbb{Z}$, the last four conjuncts can then be satisfied on such a structure if and only if a tiling exists.

We need to show that $\psi_\mathit{types}$ is satisfiable if and only if $\mathit{types}$ can tile $\mathbb{Z}\times \mathbb{Z}$. We start by showing that if such a tiling exists, then $\psi_\mathit{types}$ is satisfiable.

%We start by showing that if a tiling exists, then $\psi_\mathit{tiles}$ is satisfiable.

\begin{lemma}
\label{lemma:one_way}
Suppose $\mathit{types}$ can tile $\mathbb{Z}\times\mathbb{Z}$. Then $\psi_\mathit{types}$ is satisfiable.
\end{lemma}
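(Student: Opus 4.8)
The plan is to exhibit an explicit model $\M$ and a state $s_0$ with $\M,s_0\models\psi_\mathit{types}$, built directly from the given tiling $\mathit{tiling}:\mathbb{Z}\times\mathbb{Z}\to\mathit{types}$. First I would define the domain $W=\{s_0\}\cup(\mathbb{Z}\times\mathbb{Z})$. For the directional relations I set $R(\up)=\{((n,m),(n,m+1))\}$, and symmetrically $R(\down),R(\leftt),R(\rightt)$, so each grid state has exactly one successor in each direction. For $b$ I take $R(b)=\{s_0\}\times(\mathbb{Z}\times\mathbb{Z})\cup\{((n,m),(n',m'))\mid (n,m)\neq(n',m')\}$ (or some variant that is transitive over $D$, hits every grid state from $s_0$, and excludes $s_0$); the footnote warns the precise conditions are delicate, so I would pin down exactly what $b$ must satisfy by reading off $\psi_1,\psi_2,\psi_{3,x},\psi_{4,x}$. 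For the agent $a$, since $\refa$ demands $\lozenge_a\lozenge_a\top$ everywhere on the grid and $[\AAUL]\neg\lozenge_a\square_a\bot$, I would give each grid state an $a$-loop (or a short $a$-path of length $\geq 2$ with no dead ends), and give $s_0$ no $a$-arrows. The valuation sets $V(p)=\{s_0\}$, and on $(n,m)$ makes $p_i$ true iff $\mathit{tiling}(n,m)=i$, and $\north_c,\south_c,\east_c,\west_c$ true according to the edges of that tile.

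The bulk of the argument is then a checklist: verify each conjunct of $\psi_\mathit{types}$ holds at $s_0$. The "local" conjuncts $\mathit{one\_tile}$, $\mathit{one\_color}$, $\mathit{tile\_colors}$, $\mathit{tile\_match}$ follow immediately from the valuation and the tiling constraints, using that $R(b)$ from $s_0$ reaches exactly the grid states. The conjuncts $\mathit{inverse}$ and $\mathit{commute}$ follow from the geometric identities among $\up,\down,\leftt,\rightt$ on $\mathbb{Z}\times\mathbb{Z}$ (going up then down returns you, up-left equals left-up, etc.), but I need to check these survive an arbitrary arrow update — here the point is that $\square_a\bot$ is preserved on $s_0$ and behaves as a "marker" that is robust to arrow updates, so I must carefully trace how $[\AAUL]$ interacts with the $\square_x$ modalities. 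The conjuncts $\psi_1$ and $\psi_2$, and the "propagation/return" gadgets $\propd_x,\return_x$, are where the real work lies: these are designed to force the grid structure in the other direction, but I only need to check they are \emph{satisfied} by my concrete model, which should reduce to checking that after the specific arrow updates $U_x$ (and the existentially-quantified ones hidden in $\langle\AAUL\rangle$) the relevant $\lozenge_a\top$ / $\square_a\bot$ patterns hold at the right states.

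The main obstacle I anticipate is $\propd_x$ and $\return_x$, and the $[\AAUL]$/$\langle\AAUL\rangle$ subformulas generally. For the universal $[\AAUL]$ conjuncts I must show a property holds after \emph{every} $\mathcal{L}_\mathit{AUL}$ arrow update, which means I cannot just pick a convenient update — I need a structural reason (typically: the $a$-loops on grid states are never fully severed because $\refa$-style conditions, or the specific form of the formula, guarantee a surviving arrow, and on $s_0$ there are no $a$-arrows to begin with so $\square_a\bot$ is indestructible). For the $\langle\AAUL\rangle$ conjuncts inside $\propd_x$ and $\return_x$ I instead get to \emph{choose} an update; the job is to pick one (likely related to $U_x$, or one that deletes all $a$-arrows except at a designated target state) that makes the consequent true, and then confirm the antecedent-to-consequent implication and the outer $\square_b[\AAUL]$ wrapper hold uniformly. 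I would handle this by first computing, for my fixed model, what $\M*U$ looks like for the named updates $U_x$, and more generally characterizing which $a$-arrows any update can retain, then verifying each nested modal claim state-by-state. The geometry is simple enough that once the effect of arrow updates on the $a$-arrows is understood, each conjunct reduces to a finite case check.
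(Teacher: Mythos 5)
Your overall strategy---build the obvious grid model over $\{s_0\}\cup(\mathbb{Z}\times\mathbb{Z})$ and verify each conjunct---is exactly the paper's approach, but the concrete model you write down would not satisfy $\psi_\mathit{types}$. Two of your choices are fatal. First, you give $s_0$ no $a$-arrows, but $\psi_1$ begins with $\refa$ evaluated \emph{at} $s_0$, and $\refa$ contains $\lozenge_a\lozenge_a\top$; so $s_0$ must have an outgoing $a$-arrow. The paper simply takes $R(a)$ to be the identity on \emph{all} states, including $s_0$; your intuition that ``$\square_a\bot$ is a marker preserved at $s_0$'' misreads the gadgets, since $\square_a\bot$ appears in $\propd_x$, $\return_x$, $\mathit{inverse}$ and $\mathit{commute}$ at \emph{grid} states after updates that delete their $a$-loops, never at $s_0$. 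Second, your $R(b)$ puts arrows between distinct grid states and (per your stated desiderata, ``excludes $s_0$'') none from grid states back to $s_0$. But $\psi_2$ requires $\square_b(\refa\wedge\lozenge_b p)$ at $s_0$, so every grid state needs a $b$-arrow to a $p$-state, i.e.\ back to $s_0$; and the whole mechanism by which the $[\AAUL](\lozenge_a\top\rightarrow\square_b\square_b\lozenge_a\top)$-style conjuncts hold is that $s_0$ is the \emph{only} $b$-successor of any grid state, so that the $a$-arrow reachable by $\square_b\square_b$ from a grid state is the very same arrow as the one asserted by the antecedent. The correct choice is $R(b)=\{(s_0,(n,m))\}\cup\{((n,m),s_0)\}$ with no $b$-arrows among grid states.

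There is also a missing idea you would need for $\propd_x$ and $\return_x$: the paper adds a fresh propositional variable $p_{n,m}$ true exactly at $(n,m)$, for every grid point. These naming atoms are what make the existential $\langle\AAUL\rangle$ witnesses constructible --- e.g.\ the witness for $\return_\rightt$ at $(n,m)$ is $U_1=(\top,\rightt,\top),(\top,b,p\vee p_{n+1,m}),(p_{n+1,m},a,\top)$, which isolates $(n+1,m)$ among the $b$-successors of $s_0$ --- and they guarantee that distinct grid states are modally (indeed propositionally) distinguishable, which is needed when you must retain the $a$-loop on one grid state while deleting it on another. Without them, a periodic tiling yields bisimilar grid states that no arrow update can separate, and your plan of ``characterizing which $a$-arrows any update can retain'' runs into exactly this wall. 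With the three repairs ($a$-loop at $s_0$, the star-shaped $R(b)$, and the atoms $p_{n,m}$), your verification checklist does go through as in the paper.
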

\begin{proof}
Let $\mathit{tiling}$ be the tiling, let $p_{n,m}\in \mathcal{P}$ for every $n,m\in \mathbb{Z}$ and let $\M=(S,R,V)$ be the following, quite straightforward, encoding of $\mathit{tiling}$:
\begin{itemize}
	\item $S = (\mathbb{Z}\times\mathbb{Z})\cup s_0$
	\item $R(a) = \{(s,s)\mid s\in S\}$
	\item $R(b) = \{(s_0,(n,m))\mid n,m\in \mathbb{Z}\}\cup \{((n,m),s_0)\mid n,m\in \mathbb{Z}\}$
	\item $R(\up) = \{((n,m),(n,m+1))\mid n,m\in \mathbb{Z}\}$
	\item $R(\down) = \{((n,m),(n,m-1))\mid n,m\in \mathbb{Z}\}$
	\item $R(\leftt) = \{((n,m),(n-1,m))\mid n,m\in \mathbb{Z}\}$
	\item $R(\rightt) = \{((n,m),(n+1,m))\mid n,m\in \mathbb{Z}\}$
	\item $V(p)= \{s_0\}$
	\item $V(p_i) = \{(n,m)\mid \mathit{tiling}(n,m)=i\}$ for $i\in \mathit{tiles}$
	\item $V(\north_c) = \{(n,m)\mid (\mathit(tiling)(n,m))(\north)=c\}$ for $c\in C$
	\item $V(\south_c) = \{(n,m)\mid (\mathit(tiling)(n,m))(\south)=c\}$ for $c\in C$
	\item $V(\east_c) = \{(n,m)\mid (\mathit(tiling)(n,m))(\east)=c\}$ for $c\in C$
	\item $V(\west_c) = \{(n,m)\mid (\mathit(tiling)(n,m))(\west)=c\}$ for $c\in C$
	\item $V(p_{n,m})=\{(n,m)\}$
\end{itemize}

As mentioned above, the state $s_0$ is special: it is the origin state, and the only state that does not have a tile type associated with it. The propositional variable $p$ is used to identify this special state. We will show that $\M,s_0\models \psi_\mathit{types}$.

Note that $R(a)$ is the identity relation.
%Note that every $s\in S$ has a reflexive $a$-arrow. 
As a result, there can be no arrow update that retains the $a$-arrow from $s$ to some state $s'$ but removes the $a$-arrow from $s'$. So $\M\models \lozenge_a \lozenge_a\top \wedge [\AAUL]\neg \lozenge_a\square_a\bot$ and therefore, by definition, $\M\models \refa$. 

Now, consider the pointed model $\M,s_0$. It is straightforward to verify that $\M,s_0\models p \wedge \lozenge_b \top \wedge \square_b \neg p$.
Because we already determined that $\M,s_0\models \refa$, this suffices to show that $\M,s_0\models \psi_1$.

Additionally, $\M\models\refa$ together with the fact that there are $b$-arrows from $s_0$ to every $(n,m)$ and from every $(n,m)$ to $s_0$, which satisfies $p$, implies that $\M,s_0\models \square_b(\refa\wedge \lozenge_bp)$. Furthermore, every state $(n,m)$ has only one outgoing $b$-arrow, namely the one to $s_0$. So for any arrow update $U$, if $U$ retains the $a$-arrow on $s_0$ then it must also retain the $a$-arrow on any state that is $b$-reachable from a state $(n,m)$, since it is the same $a$-arrow. So $\M,s_0\models [\AAUL](\lozenge_a\top \rightarrow \square_b\square_b\lozenge_a\top)$. We have shown that $\M,s_0$ satisfies both conjuncts of $\psi_2$, so $\M,s_0\models \psi_2$.

Now, let us look at $\psi_{3,x}$. For every direction $x$ and every $(n,m)$, there is an $x$-arrow from $(n,m)$ to some $(n',m')$. Like $(n,m)$, this $(n',m')$ satisfies $\neg p \wedge \refa\wedge \lozenge_bp$, so we have $\M,(n,m)\models \lozenge_x (\neg p \wedge \refa\wedge \lozenge_bp)$. For every direction there is only one such successor $(n',m')$. This implies that every arrow update either removes the $a$-arrow from all $x$-successors of $(n,m)$, or it retains the $a$-arrow of all such successors. This implies that $\M,(n,m)\models [\AAUL\nolinebreak](\lozenge_x\lozenge_a\top\rightarrow \square_x\lozenge_a\top)$. This is true for any $(n,m)$, so we have $\M,s_0\models \psi_{3,x}$ for every direction $x\in D$.

All states $(n,m)$ have a $b$-arrow to the same state $s_0$. Once again we use the fact that every arrow update either eliminates or retains the $a$-arrow on this single world, it cannot do both. It follows that $\M,s_0\models [\AAUL](\lozenge_a\top \rightarrow \square_b\square_x\square_b\lozenge_a\top)$, so, by definition, $\M,s_0\models \psi_{4,x}$.

Now we get to the more difficult part, where we need to show that $s_0$ satisfies the rather complicated formula $\propd_x$, for every direction $x$. For ease of notation, we will show that $\propd_\rightt$ holds, the proof for the other directions is similar. Fix any $(n,m)$, and let $U$ be any update such that the antecedent of the implication in $\propd_\rightt$ holds after $U$, i.e.
\begin{align*}\M*U,(n,m)\models {} & \square_a\bot \wedge \lozenge_\rightt \lozenge_a\top \wedge \lozenge_b(\lozenge_b\top\wedge \square_b\lozenge_a\top)\wedge\\ & \langle \AAUL\rangle(\lozenge_\rightt \lozenge_a\top\wedge\lozenge_b\lozenge_b\square_a\bot)\end{align*}
The assumption that this antecedent holds places some restrictions on $U$. The first conjunct says that the $a$-arrow on $(n,m)$ is removed. The second conjunct says that the $\rightt$-arrow from $(n,m)$ to $(n+1,m)$ is retained, and so is the $a$-arrow on $(n+1,m)$. The third conjunct says that the $b$-arrow from $(n,m)$ to $s_0$ is retained, that at least one $b$-arrow from $s_0$ to some $(n',m')$ is retained and that every $(u,v)$ that is $b$-accessible from $s_0$ still has its $a$-arrow.
The final conjunct then says that there is some arrow update, call it $U'$, that
\begin{itemize}
	\item retains the $\rightt$-arrow from $(n,m)$ to $(n+1,m)$ and the $a$-arrow on $(n+1,m)$,
	\item retains the $b$-arrow from $(n,m)$ to $s_0$ and the $b$-arrow from $s_0$ to some state $(u,v)$ and
	\item removes the $a$-arrow on $(u,v)$.
\end{itemize} %that retains the $a$-arrow from $(n+1,m)$ while removing the $a$-arrow from $(n',m')$.
In particular, the fact that the $a$-arrow on $(u,v)$ is removed while that on $(n+1,m)$ is retained shows that, $(u,v)\not = (n+1,m)$.

For any such $U$, we need to show that the consequent is true in $(n,m)$, i.e. that 
\begin{align*}\M*U,(n,m)\models [U_\rightt]\langle \AAUL\rangle(\lozenge_\rightt \lozenge_a\top\wedge\lozenge_b\lozenge_b\square_a\bot).\end{align*}
Recall that $U_\rightt = (p\vee \square_a\bot,b,\top),(\top,a,\top),(\square_a\bot,\rightt,\top)$. 
%So we need to show that after $U_x$ has been applied, there is still some arrow update $U''$ with the same properties as $U'$ discussed above.
So $U_\rightt$ retains all $b$-arrows from $s_0$ (since $\M*U,s_0\models p$), the $b$-arrow from $(n,m)$ to $s_0$ and the $\rightt$-arrow from $(n,m)$ to $(n+1,m)$ (since $\M*U,(n,m)\models \square_a\bot$) and every $a$-arrow---and therefore in particular the $a$-arrows on $(n,m)$ and $(u,v)$. Because $(u,v)\not = (n+1,m)$, the atom $p_{(n+1,m)}$ holds in $(n+1,m)$ but not in $(u,v)$. The update $U'' := (\top,\rightt,\top),(\top,b,\top),(p_{(n+1,m)},a,\top)$ therefore retains all $\rightt$ and $b$ arrows as well as the $a$-arrow on $(n+1,m)$ while removing the $a$-arrow on $(u,v)$. Since $\M*U$, and therefore also $(\M*U)*U''$, contains a $b$-arrow from $(n,m)$ to $s_0$ and from $s_0$ to $(u,v)$ as well as a $\rightt$-arrow from $(n,m)$ to $(n+1,m)$ we have $(\M*U)*U_\rightt,(n,m)\models [U''](\lozenge_\rightt \lozenge_a\top\wedge\lozenge_b\lozenge_b\square_a\bot)$ and therefore
$\M*U,(n,m)\models [U_\rightt]\langle \AAUL\rangle(\lozenge_\rightt \lozenge_a\top\wedge\lozenge_b\lozenge_b\square_a\bot)$,
which was to be shown.

This completes the proof that $\M,s_0\models \propd_\rightt$. 
%In order for $\propd_\east$ to hold in $s_0$, it must be the case that for every $U$, the consequent of $\propd_\east$ then holds, so
%\begin{align*}\M*U,(n,m)\models [U_\east]\langle \AAUL\rangle(\lozenge_\east \lozenge_a\top\wedge\lozenge_b\lozenge_b\square_a\bot).\end{align*}
%Recall that $U_\east = (p\vee \square_a\bot,b,\top),(\top,a,\top),(\square_a\bot,\east,\top)$. So, when applied in $\M*U$, the update $U_\east$ retains the $\east$-arrow from $(n,m)$ to $(n+1,m)$, all $a$-arrows (that were still present in $\M*U$), all $b$-arrows from $s_0$ (because $\M*U,s_0\models p$) and the $b$-arrow from $(n,m)$ (because $\M*U,(n,m)\models \square_a\bot$). All other arrows are removed.
%
%%So $U_\mathit{rem}$ removes all arrows other than the $a$-arrow from $(n+1,m)$ and $(n',m')$ while retaining the $\rightt$ arrow from $(n,m)$ to $(n+1,m)$ as well as the $b$-arrows from $(n,m)$ to $s_0$ and from $s_0$ to $(n',m')$. 
%Afterwards, there must still be some update that retains the $a$-arrow on $(n+1,m)$ while removing it from $(n',m')$. This update does indeed exist because each state is identified by a unique propositional variable. The update $(\top,b,\top),(\top,\rightt,\top),(p_{n+1,m},a,\top)$ suffices, for example. So $M,s_0\models \propd_\east$.
We continue by showing that $\M,s_0\models \return_x$. Again, we show that $\M,s_0\models \return_\rightt$, the other directions can be proven similarly. The formula $\return_\rightt$ holds in $\M,s_0$ if
\begin{align*}\M,(n,m)\models {} & \langle \AAUL\rangle (\square_a\bot \wedge \lozenge_b\top \wedge \lozenge_\rightt \langle\AAUL\rangle(\lozenge_a\top \wedge \lozenge_b (\lozenge_b\top\wedge \square_b\lozenge_a\top)\wedge \\ & [\AAUL](\lozenge_a\top\rightarrow \square_b\square_b\lozenge_a\top)))\end{align*}
for every $(n,m)$. Let $U_1 := (\top,\rightt,\top), (\top,b,p\vee p_{n+1,m}),(p_{(n+1,m)},a,\top)$. The only $b$-arrows that are retained by $U_1$ are those that go to $s_0$ or to $(n+1,m)$. Because the only remaining $b$-arrow from $s_0$ is to $(n+1,m)$, every arrow update that retains the $a$-arrow on $(n+1,m)$ must do so in every $b$-successor of every $b$-successor of $(n+1,m)$, since $(n+1,m)$ is the only such $b$-$b$-successor. So $\M*U_1,(n+1,m)\models [\AAUL](\lozenge_a\top \rightarrow \square_b\square_b\lozenge_a\top)$. Furthermore, $U_1$ retains the $a$-arrow on $(n+1,m)$ so, again using the fact that $(n+1,m)$ is the only remaining $b$-successor of $s_0$, we also have $\M*U_1,(n+1,m)\models \lozenge_a\top\wedge \lozenge_b(\lozenge_b\top \wedge \square_b\lozenge_a\top)$.

Putting these conjuncts together, and using the fact that $\models \chi \rightarrow \langle \AAUL\rangle \chi$ for every $\chi$, we get \begin{equation*}\M*U_1,(n+1,m)\models \langle \AAUL\rangle (\lozenge_a\top\wedge \lozenge_b(\lozenge_b\top \wedge \square_b\lozenge_a\top)\wedge [\AAUL](\lozenge_a\top \rightarrow \square_b\square_b\lozenge_a\top)).\end{equation*}
The update $U_1$ also eliminates the $a$-arrow on $(n,m)$ while retaining the $b$-arrow from $(n,m)$ to $s_0$ and the $\rightt$-arrow from $(n,m)$ to $(n+1,m)$, so
\begin{align*}\M*U_1,(n,m)\models {} & \square_a\bot \wedge \lozenge_b\top \wedge \lozenge_\rightt \langle\AAUL\rangle(\lozenge_a\top \wedge \lozenge_b (\lozenge_b\top\wedge \square_b\lozenge_a\top)\wedge \\ & [\AAUL](\lozenge_a\top\rightarrow \square_b\square_b\lozenge_a\top)).\end{align*}
We have $U_1\in \mathcal{L}_\mathit{AUL}$, so
\begin{align*}\M,(n,m)\models {} & \langle \AAUL\rangle(\square_a\bot \wedge \lozenge_b\top \wedge \lozenge_\rightt \langle\AAUL\rangle(\lozenge_a\top \wedge \lozenge_b (\lozenge_b\top\wedge \square_b\lozenge_a\top)\wedge \\ & [\AAUL](\lozenge_a\top\rightarrow \square_b\square_b\lozenge_a\top))),\end{align*}
which was to be shown in order to prove that $\M,s_0\models \mathit{return}_\rightt$.

We continue with $\mathit{inverse}$. In $\M$, the relations $\up$ and $\down$ are each others inverses, as are $\leftt$ and $\rightt$. Furthermore, all four direction relations are functions. It follows immediately that, for every $(n,m)$, we have $\M,(n,m)\models [\AAUL](\square_a\bot\rightarrow \square_\rightt\square_\leftt\square_a\bot)$, and similarly for the other combinations of directions. So we have $\M,s_0\models \mathit{inverse}$.

Similarly, in $\M$ we have $R(\rightt)\circ R(\up)= R(\up)\circ R(\rightt)$, and the same for the other directions. It follows that $\M,s_0\models \square_b [\AAUL]\bigwedge_{(x,y)\in E}(\lozenge_x\lozenge_y\square_a\bot \rightarrow \square_y\square_x\square_a\bot)$.

The last four conjuncts of $\psi_\mathit{types}$ simply encode the fact that $\mathit{tiling}$ is a tiling on $\mathbb{Z}\times\mathbb{Z}$, so $\M,s_0$ satisfies those as well.
\end{proof}

Left to show is that if $\psi_\mathit{types}$ is satisfiable, then $\mathit{types}$ can tile the plane. We do this by showing that any model where $\psi_\mathit{types}$ is satisfied looks like the model $\M$ that we constructed above. There are some differences, certainly. For example, we cannot enforce that a state $(n,m)$ has exactly one $x$-successor for every direction $x$, only that it has at least one $x$-successor and that all its $x$-successors are indistinguishable. Such differences are not relevant to the existence of a tiling, however.
\begin{lemma}
\label{lemma:other_way}
If $\psi_\mathit{types}$ is satisfiable, then there is a tiling on $\mathbb{Z}\times\mathbb{Z}$ for $\mathit{types}$.
\end{lemma}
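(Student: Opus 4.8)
The plan is to take an arbitrary model $\M=(W,R,V)$ with $\M,s_0\models\psi_\mathit{types}$ and to read a tiling off the part of $\M$ reachable from $s_0$. Write $G=\{w\mid(s_0,w)\in R(b)\}$ for the set of \emph{grid states}. From $\psi_1$ we get at once that $s_0\in V(p)$, that $G\neq\emptyset$, and that $w\notin V(p)$ for every $w\in G$; from $\psi_2$ and $\psi_{3,x}$ we get that $\refa$ holds at every grid state and at the $x$-successors that $\psi_{3,x}$ forces grid states to have. The first job is to turn $\refa$ and the $a$-arrow bookkeeping into a usable tool. I would show: if $\refa$ holds at $v$ then $v$ has an $a$-successor and every $a$-successor of $v$ satisfies the same $\mathcal{L}_\mathit{AUL}$-formulas as $v$ (the only way an update could keep an arrow $(v,u)$ while killing all arrows out of $u$ is if $v$ and $u$ are separated by a formula, which $\refa$'s $[\AAUL]\neg\lozenge_a\square_a\bot$ conjunct forbids). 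Consequently, on states where $\refa$ holds the update $(\top,a,\neg\chi)$ (augmented with clauses $(\top,y,\top)$ for whatever direction relations must be kept) deletes the $a$-arrow of exactly the $\chi$-states, so that ``$\square_a\bot$ after this update'' behaves as the flag ``$\chi$''. Two immediate payoffs: every grid state has at least one $x$-successor for each $x\in D$, and the conjunct $[\AAUL](\lozenge_x\lozenge_a\top\rightarrow\square_x\lozenge_a\top)$ of $\psi_{3,x}$ forces all $x$-successors of a grid state to agree on every $\mathcal{L}_\mathit{AUL}$-formula, in particular on every atom appearing in $\psi_\mathit{types}$.

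Next I would prove that $G$ is closed under the four direction relations, equivalently that $b$ is transitive over directions on grid states: if $(s_0,w)\in R(b)$ and $(w,w')\in R(x)$ then $(s_0,w')\in R(b)$. This is the step I expect to be the real obstacle. The tools are $\psi_{4,x}$, $\propd_x$, $\return_x$ and the specific update $U_x=(p\vee\square_a\bot,b,\top),(\top,a,\top),(\square_a\bot,x,\top)$, which retains all $a$-arrows, the $b$-arrows leaving a $p$-world or a flagged world, and the $x$-arrows leaving a flagged world, and nothing else. Roughly, $\psi_{4,x}$ already forces every $b\,x\,b$-successor of $s_0$ to be $\mathcal{L}_\mathit{AUL}$-equivalent to $s_0$, and $\return_x$ forces the grid-like structure (with a $b$-route back to a $p$-world) to propagate forward along $x$; the nested $\langle\AAUL\rangle/[\AAUL]$ alternations of $\propd_x$ then say that a ``distinctness witness'' of the shape ``flag $w$, keep an unflagged $x$-successor of $w$, keep a $b$-route from $w$ to an $s_0$-copy, and flag some $b\,b$-descendant of $w$'' must survive restriction by $U_x$ --- and after $U_x$ the surviving $b$-routes out of $w$ are so constrained (they can only pass through a $p$-world) that the flagged descendant must have been $b$-reachable from $s_0$ already, i.e.\ must lie in $G$. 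Making precise exactly which arrows the quantified updates can and cannot retain, and checking that $\propd_x$ and $\return_x$ really do together yield closure, is the delicate part; after that a routine induction from a fixed $g_{0,0}\in G$ shows that iterating the directions never leaves $G$.

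Given closure of $G$ under directions (so that $\refa$, $\psi_{3,\cdot}$ and all the $\square_b$-guarded conjuncts of $\psi_\mathit{types}$ apply at every state we reach) and the fact that $x$-successors of a state agree on atoms, I would read $\mathit{inverse}$ and $\mathit{commute}$ through the flag as the schemes $\chi\rightarrow\square_\up\square_\down\chi$, $\chi\rightarrow\square_\down\square_\up\chi$, $\chi\rightarrow\square_\leftt\square_\rightt\chi$, $\chi\rightarrow\square_\rightt\square_\leftt\chi$ and $\lozenge_x\lozenge_y\chi\rightarrow\square_y\square_x\chi$ (for $(x,y)\in E$), valid for all $\chi\in\mathcal{L}_\mathit{AUL}$ at every grid state. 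The first four make a step in $\up$ then in $\down$ (and the other three opposite pairs) return to a state satisfying exactly the same atoms, and the last makes a horizontal step commute with a vertical one; together they say that $\up,\down,\leftt,\rightt$ induce an action of the group $\mathbb{Z}\times\mathbb{Z}$ on $G$ modulo atom-equivalence. Fixing $g_{0,0}\in G$, this produces for every $(n,m)\in\mathbb{Z}\times\mathbb{Z}$ a state $g_{n,m}\in G$, determined up to agreement on the atoms of $\psi_\mathit{types}$, reached from $g_{0,0}$ by $|n|$ steps in $\rightt$ or $\leftt$ followed by $|m|$ steps in $\up$ or $\down$ --- route-independence being exactly what $\mathit{commute}$ and $\mathit{inverse}$ supply.

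Finally I would read off the tiling. Since $g_{n,m}\in G$, $\mathit{one\_tile}$ gives a unique $i\in\mathit{types}$ with $g_{n,m}\in V(p_i)$; set $\mathit{tiling}(n,m):=i$. Then $\mathit{tile\_colors}$ and $\mathit{one\_color}$ fix the four edge-colour atoms of $g_{n,m}$ to $i(\north),i(\south),i(\east),i(\west)$, and $\mathit{tile\_match}$, applied at $g_{n,m}$, gives $\mathit{tiling}(n,m)(\north)=\mathit{tiling}(n,m+1)(\south)$ and $\mathit{tiling}(n,m)(\east)=\mathit{tiling}(n+1,m)(\west)$: up to atom-equivalence $g_{n,m+1}$ is an $\up$-successor of $g_{n,m}$ (using $\mathit{inverse}$ to adjust the orientation when $m<0$), so it satisfies $\south_{i(\north)}$, and reading its tile type back through $\mathit{tile\_colors}$ and $\mathit{one\_color}$ forces $\mathit{tiling}(n,m+1)(\south)=i(\north)$; symmetrically, $g_{n,m}$ is a $\leftt$-successor of $g_{n+1,m}$ and $\west_c\rightarrow\square_\leftt\east_c$ yields the east--west constraint. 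Hence $\mathit{tiling}$ is a solution to the instance $\mathit{types}$, which is what the lemma asks for.
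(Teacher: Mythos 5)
Your overall architecture matches the paper's: establish the $\refa$ machinery (every $a$-successor of a $\refa$-state is modally equivalent to it, so ``$\square_a\bot$ after an update'' serves as a definable flag), show that the $b$-successors of $s_0$ are closed under the direction relations up to indistinguishability, use $\mathit{inverse}$ and $\mathit{commute}$ to get a $\mathbb{Z}\times\mathbb{Z}$ grid, and read the tiling off the last four conjuncts. The opening and closing parts of your argument are essentially the paper's and are fine.

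The problem is that the one step you yourself flag as ``the real obstacle'' --- closure of the grid under $\up,\down,\leftt,\rightt$ --- is exactly the heart of the lemma, and you do not prove it; the paper devotes most of its proof to precisely this. Moreover, your sketch of the mechanism is not the one that works. The actual argument is in two halves. First, $\propd_x$ (applied to an arbitrary $U_1$ realizing the configuration ``$s_1$ flagged, $x$-successor $s_2$ unflagged, $b$-arrow $s_1\to s_0$, $b$-successor $s_3$ of $s_0$ unflagged'') says that if $s_2$ and $s_3$ are distinguishable then they remain distinguishable after $U_x$, which strips $s_2$ and $s_3$ of all outgoing arrows except the $a$-loop --- hence they must be \emph{propositionally} distinguishable; a further case analysis (modifying $U_1$ by adding or restricting $b$-clauses with a distinguishing formula $\chi$) upgrades this from ``distinguishable in $\M*U_1$'' to ``distinguishable in $\M$''. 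Second, $\return_x$ exhibits a concrete $U_1$ realizing that configuration in which the embedded $[\AAUL](\lozenge_a\top\rightarrow\square_b\square_b\lozenge_a\top)$ at $s_2$ forbids retaining the $a$-arrow on $s_2$ while deleting it on $s_3$, forcing $s_2$ and $s_3$ to be indistinguishable in $\M*U_1$ and therefore, by the first half, in $\M$. The conclusion is only that some $b$-successor of $s_0$ is \emph{indistinguishable} from $s_2$ --- your literal claim $(s_0,w')\in R(b)$ is stronger than what the formulas give, though indistinguishability suffices for reading off atoms. Your sketch instead has $\propd_x$ directly certifying that a flagged descendant ``must lie in $G$,'' which is not what the formula does, and it never uses the $[\AAUL]$-conjunct inside $\return_x$ that actually closes the argument. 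As written, the proposal therefore has a genuine gap at the decisive step.
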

\begin{proof}
Suppose $\psi_\mathit{types}$ is satisfiable. Then there is some pointed model $\M,s_0$ such that $\M,s_0\models \psi_\mathit{types}$.

First, consider any state $s$ such that $\M,s$ satisfies $\refa$. Then there is some $s'$ that is $a$-accessible from $s$. Furthermore, by $[\AAUL]\neg \lozenge_a\square_a\bot$, there is no arrow update that retains the $a$-arrow from $s$ to $s'$ while removing any $a$-arrows from $s'$. It follows that there is no modal formula $\phi$ that holds on $s$ but not on $s'$, since otherwise $(\phi,a,\top)$ would have been such an arrow update. So any state $s$ that satisfies $\refa$ has an $a$-arrow to a state that is indistinguishable from $s$, and it only has $a$-arrows to such indistinguishable states.

In order to make the proof easier to read, it is convenient to draw figures of $\M$ as we use $\psi_\mathit{types}$ to construct it. In these figures, we use the following rules for simplification:
\begin{enumerate}
	\item We do not draw the entire model, but only those parts of it that are of interest for the part of the proof illustrated by the figure. So unless explicitly stated otherwise, the lack of a drawn arrow between two states does not always indicate that there is no arrow between them.
	\item When two states are modally indistinguishable, we draw them as if they were one state. So, strictly speaking, when we draw a state $s$ we mean an equivalence class $[s]$ of states.
	\item Most states that we consider satisfy $\refa$. 
	Together with the second rule of simplification, this means we could draw a reflexive $a$-arrow on those worlds. For simplicity of presentation we do not draw these arrows.
\end{enumerate}
Now, suppose $\M,s_0\models \psi_\mathit{types}$. The first conjunct of $\psi_\mathit{types}$ states that $\M,s_0\models \refa \wedge p \wedge \lozenge_b\top\wedge\square_b\neg p$. This means that, so far, our model looks something like the following.
\begin{center}
\begin{tikzpicture}
\fill (0,0) circle (0.03) node (s0) {} node[below] {$s_0: p$} node[right] {};
\fill (0,1.5) circle (0.03) node (s1) {} node[above] {$s_1: \neg p$} node[right] {};

\draw[->,thick] (s0) -- (s1) node[midway, right] {$b$};
\end{tikzpicture}
\end{center}

Next, consider $\psi_2$, the second conjunct of $\psi_\mathit{types}$, which states that $\M,s_0\models \square_b (\refa \wedge \lozenge_b p)\wedge [\AAUL](\lozenge_a\top \rightarrow \square_b\square_b\lozenge_a \top)$. In particular, it implies that any state $s_1$ that is $b$-accessible from $s_0$ satisfies $\refa$, and has a $b$-accessible $p$ state. 

Suppose towards a contradiction that there is any $b$-successor $s'$ of $s_1$ that is modally distinguishable from $s_0$. Then there is some $\phi$ that holds on $s_0$ but not on $s'$. Consider then the update $U = (\phi,a,\top),(\top,b,\top)$. We have $M,s_0\models [U](\lozenge_a\top \wedge \lozenge_b\lozenge_b\square_a\bot)$, contradicting $M,s_0\models [\AAUL](\lozenge_a\top \rightarrow \square_b\square_b\lozenge_a \top)$. Our assumption that such distinguishable $s'$ exists was therefore false, so our model looks like this:
\begin{center}
\begin{tikzpicture}
\fill (0,0) circle (0.03) node (s0) {} node[below] {$s_0: p$} node[right] {};
\fill (0,1.5) circle (0.03) node (s1) {} node[above] {$s_1: \neg p$} node[right] {};

\draw[<->,thick] (s0) -- (s1) node[midway, right] {$b$};
\end{tikzpicture}
\end{center}

Now, let us consider $\psi_{3,x}$, which tells us that $\M,s_0\models \square_b (\lozenge_x (\neg p \wedge \refa \wedge \lozenge_b p) \wedge [\AAUL](\lozenge_x\lozenge_a\top\rightarrow \square_x\lozenge_a\top))$ for every direction $x\in D$. Firstly, from $\M,s_1\models \lozenge_x (\neg p \wedge \refa \wedge \lozenge_b p)$ we learn that there is a state $s_2$ that is accessible from $s_1$, and that this $s_2$ satisfies $\neg p$, $\refa$ and $\lozenge_b p$. Furthermore, by $\M,s_0\models [\AAUL](\lozenge_x\lozenge_a\top\rightarrow \square_x\lozenge_a\top)$ we know that this $s_2$ is unique (up to modal indistinguishability), since otherwise there would be some arrow update that retains the $a$-arrow on $s_2$ but removes it from some other $x$-successor of $s_1$.

Let us then look at the conjunct $\psi_{4,x}$, which states that $\M,s_0\models [\AAUL](\lozenge_a \top \rightarrow \square_b\square_x\square_b\lozenge_a\top)$. This tells us that every $b$-successor of $s_2$ must be indistinguishable from $s_0$, as otherwise there would be an arrow update removing the $a$-arrow from such a successor while retaining the $a$-arrow on $s_0$. Our drawing of the model therefore becomes as follows.
\begin{center}
\begin{tikzpicture}
\fill (0,0) circle (0.03) node (s0) {} node[below] {$s_0: p$} node[right] {};
\fill (0,1.5) circle (0.03) node (s1) {} node[above] {$s_1: \neg p$} node[right] {};
\fill (3,1.5) circle (0.03) node (s2) {} node[above] {$s_2: \neg p$} node[right] {};

\draw[<->,thick] (s0) -- (s1) node[midway, right] {$b$};
\draw[->,thick] (s1) -- (s2) node[midway, above] {$x$};
\draw[->,thick] (s2) -- (s0) node[midway, below right] {$b$};
\end{tikzpicture}
\end{center}

Now, we get to the hard part. The formula $\propd_x$, which is the fifth conjunct of $\psi_\mathit{types}$, tells us that
\begin{align*}\M,s_0\models & \square_b [\AAUL] ((\square_a\bot \wedge \lozenge_x \lozenge_a\top \wedge \lozenge_b(\lozenge_b\top\wedge \square_b\lozenge_a\top)\wedge \\
&\langle \AAUL\rangle(\lozenge_x \lozenge_a\top\wedge\lozenge_b\lozenge_b\square_a\bot))\rightarrow [U_\mathit{x}]\langle \AAUL\rangle(\lozenge_x \lozenge_a\top\wedge\lozenge_b\lozenge_b\square_a\bot))\end{align*}
where
\begin{align*}U_\mathit{x} := {} & (p\vee \square_a\bot,b,\top),(\top,a,\top),(\square_a\bot,x,\top).\end{align*}
The formula starts with $\square_b$, which takes us to any $b$-successor of $s_0$. Let us assume without loss of generality that this is the $s_1$ we drew earlier. In $s_1$, it must be the case that any update $U_1$ that makes $(\square_a\bot \wedge \lozenge_x \lozenge_a\top \wedge \lozenge_b(\lozenge_b\top\wedge \square_b\lozenge_a\top)\wedge \langle \AAUL\rangle(\lozenge_x \lozenge_a\top\wedge\lozenge_b\lozenge_b\square_a\bot)$ true also makes $[U_\mathit{x}]\langle \AAUL\rangle(\lozenge_x \lozenge_a\top\wedge\lozenge_b\lozenge_b\square_a\bot)$ true. Let us look at what $M*U_1$ would look like. Firstly, we have $M*U_1,s_1\models \square_a\bot \wedge \lozenge_x\lozenge_a\top$. So the $a$-arrow on $s_1$ is removed, while the $x$-arrow to $s_2$ and the $a$-arrow on $s_2$ are retained.

Furthermore, $\M*U_1,s_1\models \lozenge_b(\lozenge_b\top\wedge \square_b\lozenge_a\top)$, so the $b$-arrow from $s_1$ to $s_0$ is retained, at least one $b$-arrow from $s_0$ is retained and every $b$-arrow from $s_0$ that still exists points to a $\lozenge_a$ state. So far, the situation in $\M*U_1$ can be drawn as in the following diagram, where $s_3$ is some state $b$-accessible from $s_0$.

\begin{center}
\begin{tikzpicture}
\fill (0,0) circle (0.03) node (s0) {} node[below] {$s_0$} node[right] {};
\fill (0,1.5) circle (0.03) node (s1) {} node[above] {$s_1: \square_a\bot$} node[right] {};
\fill (3,1.5) circle (0.03) node (s2) {} node[above] {$s_2: \lozenge_a\top$} node[right] {};
\fill (-4,1.5) circle (0.03) node (s3) {} node[above] {$s_3: \lozenge_a\top$} node[right] {};

\draw[<-,thick] (s0) -- (s1) node[midway, right] {$b$};
\draw[->,thick] (s0) -- (s3) node[midway, above right] {$b$};
\draw[->,thick] (s1) -- (s2) node[midway, above] {$x$};
%\draw[->,thick] (s2) -- (s0) node[midway, below right] {$b$};
\end{tikzpicture}
\end{center}

Finally, $\M*U_1,s_1\models \langle \AAUL\rangle(\lozenge_x \lozenge_a\top\wedge\lozenge_b\lozenge_b\square_a\bot)$. So there is some arrow update $U_2$ that retains the $a$-arrow on $s_2$ while removing the $a$-arrow on $s_3$. Such $U_2$ exists if and only if $s_2$ and $s_3$ are modally distinguishable in $\M*U_1$. The formula $\propd_x$ states that for every such $U_1$, we must have $\M*U_1,s_1\models [U_\mathit{x}]\langle \AAUL\nolinebreak\rangle(\lozenge_x \lozenge_a\top\wedge\lozenge_b\lozenge_b\square_a\bot)$. The update $U_\mathit{x}$ is designed in such a way that it removes all arrows, other than the reflexive $a$-arrow, from $s_2$ and $s_3$ while retaining the $x$ arrow from $s_1$ to $s_2$ as well as the $b$-arrows from $s_1$ to $s_0$ and from $s_0$ to $s_3$. So, using the simplification rules, the drawing of $(\M*U_1)*U_\mathit{x}$ is actually the same as the drawing of $\M*U_1$ given above.

However, there is a difference. In $\M*U_1$, we did not draw any arrows from $s_2$ and $s_3$ because there are not guaranteed to be such arrows. In $(\M*U_1)*U_\mathit{x}$ we did not draw any such arrows because they are guaranteed not to exist.

We have $(\M*U_1)*U_\mathit{x},s_1\models\langle \AAUL\rangle\lozenge_x \lozenge_a\top\wedge\lozenge_b\lozenge_b\square_a\bot$, which is the case if and only if $s_2$ and $s_3$ are distinguishable in $(\M*U_1)*U_\mathit{x}$. Since, in $(M*U_1)*U_\mathit{x}$, $s_2$ and $s_3$ have no outgoing arrows other than the $a$-arrow to a state that is indistinguishable from them, it follows that $s_2$ and $s_3$ must be propositionally distinguishable.

In summary, $\propd_x$ states that, for any arrow update $U_1$, if $M*U_1$ matches the drawing shown above and $s_2$ is distinguishable (in $\M*U_1$) from some $b$-successor (in $\M*U_1$) $s_3$ of $s_0$, then it is propositionally distinguishable from such a successor.

Furthermore, we can show that the states also have to be propositionally distinguishable if they are distinguishable in $\M$ (as opposed to $\M*U_1$). Suppose that $U_1$ is such that $\M*U_1$ matches the drawing and that $s_2$ is distinguishable from $s_3$ in $\M$. Now, let $\chi$ be any modal formula that distinguishes between $s_2$ and $s_3$ and (by negating if necessary) assume that $\chi$ holds on $s_1$. We distinguish between three cases:
\begin{itemize}
	\item Suppose $U_1$ removes the $b$-arrows from both $s_2$ and $s_3$. Then let $U_1':= U_1\cup \{(\chi\wedge \neg p,a,\top)\}$.
	\item Suppose $U_1$ removes the $b$-arrow from one of $s_2$ and $s_3$, while retaining the other. Then let $U_1':=U_1$.
	\item Suppose $U_1$ retains the $b$-arrows on both $s_2$ and $s_3$. Then let $U_1'$ be the update obtained by replacing every clause $(\phi,b,\psi)\in U_1$ by $(\phi\wedge (p\vee \chi),b,\psi)$.
\end{itemize}
In any of the three cases, $\M*U_1'$ matches the figure and, furthermore, exactly one of $s_2$ and $s_3$ has a $b$-successor in $\M*U_1'$. In particular, $s_2$ and $s_3$ are distinguishable in $\M*U_1'$. It follows that $s_2$ and $s_3$ are propositionally distinguishable.
%Then we create an update $U_1'$, based on $U_1$, such that $\M*U_1'$ matches the drawing and, furthermore, exactly one of $s_2$ and $s_3$ retains its $b$-arrow to $s_0$. (Such $U_1'$ can be obtained by adding replacing every clause $(\phi,b,\psi)$ of $U_1$ with $(\phi,b,\psi\wedge (p\vee \chi))$ in $U_1$, where $\chi$ distinguishes between $s_2$ and $s_3$.)
%Then we can modify $U_1$ to an arrow update $U_1'$ that removes the $b$-arrow from $s_3$ to $s_0$ while retaining the $b$-arrow from $s_2$ to $s_0$. Then $\M*U_1'$ also matches the drawing, and $s_2$ and $s_3$ are distinguishable in $\M*U_1'$. So $s_2$ and $s_3$ are propositionally distinguishable.

Summarizing once again, using this new observation: for any arrow update $U_1$, if $\M*U_1$ matches the drawing and $s_2$ is distinguishable (in $\M$) from some $b$-successor (in $\M*U_1$) $s_3$ of $s_0$, then $s_2$ is propositionally distinguishable from such a $b$-successor.

Next, we look at $\return_x$. We have
\begin{align*}\M,s_0\models {} & \square_b \langle \AAUL\rangle (\square_a\bot \wedge \lozenge_b\top \wedge \lozenge_x (\lozenge_a\top \wedge \lozenge_b (\lozenge_b\top\wedge \square_b\lozenge_a\top)\wedge \\ & [\AAUL](\lozenge_a\top\rightarrow \square_b\square_b\lozenge_a\top))).\end{align*}
Again, this  starts with a $\square_b$ operator, so we go to any $b$-successor of $s_0$ and assume without loss of generality that this successor is $s_1$. Then there is some update $U_1$ such that $\M*U_1,s_1\models (\square_a\bot \wedge \lozenge_b \top \wedge \lozenge_x (\lozenge_a\top \wedge \lozenge_b (\lozenge_b\top\wedge \square_b\lozenge_a\top)\wedge [\AAUL\nolinebreak](\lozenge_a\top\rightarrow \square_b\square_b\lozenge_a\top)))$.

We will discuss the $[\AAUL](\lozenge_a\top\rightarrow \square_b\square_b\lozenge_a\top)$ part of this formula later. For now, note that the other parts of the formula state that the model $\M*U_1$ can be drawn as follows.
\begin{center}
\begin{tikzpicture}
\fill (0,0) circle (0.03) node (s0) {} node[below] {$s_0$} node[right] {};
\fill (0,1.5) circle (0.03) node (s1) {} node[above] {$s_1: \square_a\bot$} node[right] {};
\fill (3,1.5) circle (0.03) node (s2) {} node[above] {$s_2: \lozenge_a\top$} node[right] {};
\fill (-4,1.5) circle (0.03) node (s3) {} node[above] {$s_3: \lozenge_a\top$} node[right] {};

\draw[<-,thick] (s0) -- (s1) node[midway, right] {$b$};
\draw[->,thick] (s0) -- (s3) node[midway, above right] {$b$};
\draw[->,thick] (s1) -- (s2) node[midway, above] {$x$};
\draw[->,thick] (s2) -- (s0) node[midway, below right] {$b$};
%\draw[->,thick] (s2) -- (s0) node[midway, below right] {$b$};
\end{tikzpicture}
\end{center}

Now recall the diagram drawn further above, when discussing $\propd_x$. The only difference between the two diagrams is that the newer one has an extra $b$-arrow from $s_1$ to $s_0$. By the conventions with which we draw these diagrams, this means the newer diagram is an instance of the older one. So by the fact that $\M,s_0\models \propd_x$, we know that if $s_2$ and $s_3$ are distinguishable in $\M$, then they are propositionally distinguishable so, in particular, that they are distinguishable in $\M*U_1$.

Now, we return to the subformula $[\AAUL](\lozenge_a\top\rightarrow\square_b\square_b\lozenge_a\top)$, which has to hold in $s_2$. It implies that we cannot retain the $a$-arrow on $s_2$ while removing the one on $s_3$. So $s_2$ and $s_3$ are indistinguishable in $\M*U_1$. We already determined that this implies that $s_2$ and $s_3$ are indistinguishable in $\M$ as well.

In particular, this implies that there is some state $s_3$ that is $b$-reachable from $s_0$ and that is indistinguishable from $s_2$ (in $\M$). So $b$ is what could be called \emph{indistinguishable-transitive} over $x$, i.e. for every $(s_0,s_1)\in R(b)$ and every $(s_1,s_2)\in R(x)$ there is some $s_3$ that is modally indistinguishable from $s_2$ such that $(s_0,s_2)\in R(b)$. By the convention to draw indistinguishable worlds as being the same world, this means our diagram of $\M$ simplifies to the following:
\begin{center}
\begin{tikzpicture}
\fill (0,0) circle (0.03) node (s0) {} node[below] {$s_0$} node[right] {};
\fill (0,1.5) circle (0.03) node (s1) {} node[above] {$s_1$} node[right] {};
\fill (3,1.5) circle (0.03) node (s2) {} node[above] {$s_2$} node[right] {};
%\fill (-4,1.5) circle (0.03) node (s3) {} node[above] {$s_3: \lozenge_a\top$} node[right] {};

\draw[<->,thick] (s0) -- (s1) node[midway, right] {$b$};
\draw[<->,thick] (s0) -- (s2) node[midway, below right] {$b$};
\draw[->,thick] (s1) -- (s2) node[midway, above] {$x$};
%\draw[->,thick] (s2) -- (s0) node[midway, below right] {$b$};
%\draw[->,thick] (s2) -- (s0) node[midway, below right] {$b$};
\end{tikzpicture}
\end{center}

Furthermore, this holds for any $b$-successor $s_1$ of $s_0$. So there is also a unique (up to indistinguishability) $x$-successor $s_3$ of $s_2$ that is $b$-reachable from $s_1$, a unique $x$-successor $s_4$ of $s_3$ that is $b$-reachable from $s_0$ and so on.
\begin{center}
\begin{tikzpicture}
\fill (0,0) circle (0.03) node (s0) {} node[below] {$s_0$} node[right] {};
\fill (-5,1.5) circle (0.03) node (s1) {} node[above] {$s_1$} node[right] {};
\fill (-2,1.5) circle (0.03) node (s2) {} node[above] {$s_2$} node[right] {};
\fill (1,1.5) circle (0.03) node (s3) {} node[above] {$s_3$} node[right] {};
\fill (4,1.5) circle (0.03) node (s4) {} node[above] {$s_4$} node[right] {};
\node at (5,1.5) {$\cdots$};
%\fill (-4,1.5) circle (0.03) node (s3) {} node[above] {$s_3: \lozenge_a\top$} node[right] {};

\draw[<->,thick] (s0) -- (s1) node[midway, below left] {$b$};
\draw[<->,thick] (s0) -- (s2) node[midway, left] {$b$};
\draw[<->,thick] (s0) -- (s3) node[midway, right] {$b$};
\draw[<->,thick] (s0) -- (s4) node[midway, below right] {$b$};
\draw[->,thick] (s1) -- (s2) node[midway, above] {$x$};
\draw[->,thick] (s2) -- (s3) node[midway, above] {$x$};
\draw[->,thick] (s3) -- (s4) node[midway, above] {$x$};
%\draw[->,thick] (s2) -- (s0) node[midway, below right] {$b$};
%\draw[->,thick] (s2) -- (s0) node[midway, below right] {$b$};
\end{tikzpicture}
\end{center}
Also note that this holds for every direction $x\in D$. So we are approaching the grid-like structure that we need, with $s_0$ being a special point that lies outside the grid.

The next two formulas enforce the grid-like structure, by placing restrictions on certain combinations of directions. Firstly, consider $\mathit{inverse}$. We have
\begin{align*}\M,s_0\models {} & \square_b [\AAUL](\square_a\bot \rightarrow(\square_\up\square_\down\square_a\bot\wedge \square_\down\square_\up\square_a\bot \\ & \wedge \square_\leftt\square_\rightt\square_a\bot \wedge \square_\rightt\square_\leftt\square_a\bot))\end{align*}
Simply put, this formula states that, for every $b$-successor $s_1$ of $s_0$, the unique $\down$-successor of the $\up$-successor of $s_1$ is indistinguishable from $s_1$, and similar for the other combinations of opposite directions. So in $\M$ the opposite directions act as each other's inverses.

Now, consider $\mathit{commute}$. We have
\begin{align*}\M,s_0\models \square_b [\AAUL]\bigwedge_{(x,y)\in E}(\lozenge_x\lozenge_y\square_a\bot \rightarrow \square_y\square_x\square_a\bot)\end{align*}
and $E = \{(\up,\leftt), (\up,\rightt),(\down,\leftt),(\down,\rightt),(\leftt,\up), \allowbreak (\leftt,\down), \allowbreak (\rightt,\up),(\rightt,\down)\}$. This formula simply states that the $\up$-successor of the $\leftt$-successor of $s_1$ is indistinguishable from the $\leftt$-successor of its $\up$-successor, and the same for the other combinations of directions.

Putting all of the above together, we learn that the $b$-successors of $s_0$ form a two-dimensional grid.\footnote{Or, to be slightly more precise: at least one two-dimensional grid. There may be multiple disconnected grids.} The remaining conjuncts of $\psi_\mathit{types}$ straightforwardly encode that this grid is tiled by the set $\mathit{types}$.

The formula $\mathit{one\_tile}$ states that every point in the grid satisfies exactly one propositional variable $p_i$ with $i\in \mathit{types}$. The formula $\mathit{one\_color}$ states that every edge of every point in the grid has at most one color for each direction. The formula $\mathit{tile\_colors}$ guarantees that if a point satisfies $p_i$ then the edge colors of that point are the edge colors of tile $i$. Finally, the formula $\mathit{tile\_match}$ guarantees that the colors of opposing edges of neighboring states match.

All in all, this shows that if $\psi_\mathit{types}$ is satisfiable, then the set $\mathit{types}$ can tile the plane.
\end{proof}

\begin{theorem}
The satisfiability problem of AAUL is co-RE hard.
\end{theorem}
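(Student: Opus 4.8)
The plan is to combine the two preceding lemmas into an effective many-one reduction from the tiling problem to AAUL-satisfiability. Lemma~\ref{lemma:one_way} gives one direction and Lemma~\ref{lemma:other_way} the other, so together they establish that for every instance $\mathit{types}$ of the tiling problem, $\psi_\mathit{types}$ is satisfiable if and only if $\mathit{types}$ can tile $\mathbb{Z}\times\mathbb{Z}$. Thus the assignment $\mathit{types}\mapsto\psi_\mathit{types}$ maps tileable instances exactly to satisfiable formulas.

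The only thing left to check is that this assignment is computable. An instance of the tiling problem is given by a finite set $\mathit{types}$ of tile types over a finite color set $C$, and the definition of $\psi_\mathit{types}$ builds the formula from $\mathit{types}$ and $C$ using finitely many Boolean connectives, modalities, and arrow-update clauses; every ``big'' conjunction occurring in it ranges over one of the finite, explicitly computable index sets $D$, $E$, $\mathit{types}$, $C$, or the set of pairs $i\neq j$ in $\mathit{types}$. Hence $\psi_\mathit{types}$ can be produced from $\mathit{types}$ by an algorithm, so the reduction is indeed computable.

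Finally I would invoke the fact that the tiling problem is co-RE complete~\cite{berger_1966}, and in particular co-RE hard. Since a computable reduction from a co-RE hard problem witnesses co-RE hardness of its target, it follows that the satisfiability problem of AAUL is co-RE hard. The substantive content of the argument lies entirely in Lemmas~\ref{lemma:one_way} and~\ref{lemma:other_way}; the remaining step, verifying that the encoding $\mathit{types}\mapsto\psi_\mathit{types}$ is effective, is routine, so for this final step there is no serious obstacle to anticipate.
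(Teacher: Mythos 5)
Your proposal is correct and follows essentially the same route as the paper: it combines Lemmas~\ref{lemma:one_way} and~\ref{lemma:other_way} into the biconditional, notes that $\mathit{types}\mapsto\psi_\mathit{types}$ is computable, and invokes the co-RE completeness of the tiling problem. The only difference is that you spell out the (routine) computability check in slightly more detail than the paper does.
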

\begin{proof}
Given an instance $\mathit{types}$ of the tiling problem, the formula $\psi_\mathit{types}$ is computable. Furthermore, Lemmas~\ref{lemma:one_way} and \ref{lemma:other_way} show that $\psi_\mathit{types}$ is satisfiable if and only if $\mathit{types}$ can tile the plane. The tiling problem is known to be co-RE complete \cite{berger_1966}, therefore the satisfiability problem of AAUL is co-RE hard.
\end{proof}

\section{Conclusion}
We have shown that the satisfiability of AAUL is uncomputable, like that of similar logics such as APAL \cite{french_2008} and GAL \cite{agotnes_2014}. It is not currently known whether the satisfiability problem of AAUL is co-RE. Typically, one would show that a satisfiability problem is co-RE by providing an axiomatization for the logic, thereby showing the validities of the logic to be RE. However, while there are known axiomatizations for AAUL, APAL and GAL, these axiomatizations are infinitary and therefore cannot be used to enumerate the valid formulas of the logics in question.\footnote{Finitary axiomatizations for APAL and GAL were proposed, in \cite{APAL} and \cite{Agotnes2010} respectively, but these were later shown to be unsound.}

One interesting direction for future research is therefore to determine whether the satisfiability problems of AAUL, APAL and GAL are co-RE complete, and whether these logics admit finitary axiomatizations.

In principle, the proof that we gave for the undecidability of AAUL applies only to the satisfiability problem when considered over the class of all Kripke models. However, we believe that the proof can be adapted to work for the satisfiability problem with respect to other common classes of models such as KD45 and S5.

\bibliographystyle{plain}
\bibliography{arbitraryarrows_v7}

\end{document}